\tikzstyle{path} = [color=black,opacity=.30,line cap=round, line join=round, line width=10pt]
\newtheorem{lemma}{Lemma}
\newtheorem{theorem}{Theorem}
\newlength{\RoundedBoxWidth}
\newsavebox{\GrayRoundedBox}
\newenvironment{GrayBox}[1]{\setlength{\RoundedBoxWidth}{.93\textwidth}
    \def\boxheading{#1}
    \begin{lrbox}{\GrayRoundedBox}
       \begin{minipage}{\RoundedBoxWidth}}{   \end{minipage}
    \end{lrbox}
    \begin{center}
    \begin{tikzpicture}\node(Text)[draw=black!20,fill=white,rounded corners,inner sep=2ex,text width=\RoundedBoxWidth]{\usebox{\GrayRoundedBox}};
        \coordinate(x) at (current bounding box.north west);
        \node [draw=white,rectangle,inner sep=3pt,anchor=north west,fill=white]
        at ($(x)+(6pt,.75em)$) {\boxheading};
    \end{tikzpicture}
    \end{center}}
\newenvironment{defproblemx}[2][]{\noindent\ignorespaces \FrameSep=6pt\parindent=0pt\vspace*{-1.5em}
                \ifthenelse{\isempty{#1}}{\begin{GrayBox}{\textsc{#2}}}{\begin{GrayBox}{\textsc{#2}  parameterized by~{#1}}}
                \begin{tabular*}{\textwidth}{@{\hspace{.1em}} >{\itshape} p{1.8cm} p{0.8\textwidth} @{}}}{
                \end{tabular*}\end{GrayBox}\ignorespacesafterend
            }
\newcommand{\defproblema}[3]{\begin{defproblemx}{#1}
    Input:  & #2 \\
    Task: & #3
  \end{defproblemx}
}
\newtheorem{redrule}{Reduction rule}
\newcommand{\NP}{\textsf{\textup{NP}}}
\newcommand{\FPT}{\textsf{\textup{FPT}}}
\newcommand{\APX}{\textsf{\textup{APX}}}
\DeclareMathOperator{\poly}{poly}
\newcommand{\ver}[2]{\ensuremath{#1_{\vert #2}}}
\newcommand{\BEVS}{\textsc{Bicluster Editing with Vertex Splitting}}
\newcommand{\BEOVS}{\textsc{Bicluster Editing with One-Sided Vertex Splitting}}
\title{Overlapping Biclustering}
\author[1,2]{Matthias Bentert}
\author[1]{Pål Grønås Drange}
\author[1]{Erlend Haugen}
\affil[1]{University of Bergen, Norway}
\affil[2]{Technische Universität Berlin, Germany}
\date{}
\begin{document}

\maketitle

\begin{abstract}
    We study the problem of transforming bipartite graphs into bicluster graphs. Abu-Khzam, Isenmann, and Merchad~[IWOCA '25] introduced two variants of this problem.
In both problems, the goal is to transform a bipartite graph into a bicluster graph with at most~$k$ operations, where the allowed operations are inserting an edge, deleting an edge, and splitting a vertex.
    Splitting a vertex~$v$ refers to replacing~$v$ by two new vertices  whose combined neighborhood equals the neighborhood of~$v$.
    The latter models overlapping clusters, that is, vertices belonging to multiple clusters, and is motivated by several real-world applications.
    The versions differ in that one variant allows splitting any vertex, while the second variant only allows vertex splits on one side of the bipartition.

    Regarding computational complexity, they showed APX-hardness for both variants and a polynomial kernel (with~$O(k^5)$ vertices) for the one-sided variant.
    They asked as open problems whether the polynomial kernel can be improved and whether it can also be extended for the other variant.
    We answer both questions in the affirmative and give kernels with~$O(k^2)$ vertices for both variants.
    We also show that both problems can be solved in~$O(k^{11k} + n + m)$ time, where $n$ and $m$ denote the number of vertices and edges in the input graph, respectively.
\end{abstract}

\section{Introduction}

Clustering is a fundamental task in data analysis, aiming to group similar
objects based on a given similarity measure. In \emph{correlation clustering},
the input is a set of objects with pairwise similarity or dissimilarity
information, and the goal is to partition the objects into clusters of mutually
similar entities~\cite{bansal_correlation_2004}. This approach is widely used
in machine learning, document classification, gene expression analysis, image
segmentation, and social network analysis.
Clustering is often modeled as the graph problem \textsc{Cluster Editing}~\cite{crespelle_survey_2023,komusiewicz_cluster_2012}.
Here, objects are vertices and an edge between two vertices indicates similarity.
The task is then to add or delete the fewest number of edges to obtain a \emph{cluster graph}---a disjoint union of cliques.

However, many datasets do not naturally express similarity in terms of pairwise
relationships. Instead, they are \emph{dichotomous}, representing binary
associations between two distinct types of objects. These datasets are common
in a variety of applications: politicians and the bills they support,
movie enjoyers and the films they watch, or students and the courses they
take~\cite{jiang_bifold_2017}. Such data can be modeled as bipartite graphs,
where one part corresponds to decision-makers and the other to their
choices. Similar binary structures arise in gene-disease
associations~\cite{bauer-mehren_gene-disease_2011} and chemical
reactions~\cite{craciun_multiple_2006}.

To cluster such dichotomous data, cliques are no longer the appropriate
target. Instead, the goal becomes transforming the input into a \emph{bicluster
  graph}---a disjoint union of bicliques (complete bipartite graphs). This
leads to the problem of \textsc{Bicluster Editing}, which asks for the minimum
number of edge modifications needed to achieve such a
structure. \textsc{Bicluster Editing} has numerous applications, including gene
expression analysis, community detection, and in recommendation systems~\cite{barber2007modularitycommunity,cheng1999biclusteringexpression,MO04,pontes2015biclusteringexpression,sun_efficient_2014,tanay2005biclusteringalgorithms}.

In certain cases, some objects naturally belong to multiple clusters.  In
document classification, where we assign keywords to documents, a single
document often fits several keywords, and the resulting groups do not form
disjoint clusters.  In community detection within social network analysis, we
typically assume each person belongs to only one community---an overly strong
assumption.  Similarly, in sentiment analysis, where we classify the emotional
tone of a message, a sentence may naturally belong to more than one category.
For example ``\textit{I love this example, but I hate how long it took me to
  come up with it}.''

In the bipartite case, similar challenges arise.  Consider movie recommender systems.
On the one side, we have the users and on the other side, we have the movies.
Edges indicate that a particular user likes a particular movie.
One can imagine a group of users all liking the same set of action movies---forming a biclique---and another group doing the same for romantic films.
But full separation is unlikely since some users will enjoy both action movies and romances.
Naturally, such users should belong to both clusters, that is, clusters should allow for some amount of \emph{overlap}~\cite{du_overlapping_2008}.
Such overlapping (bi)clusters are modeled in the context of graph problems by \emph{vertex splitting}: a vertex~$v$ with neighborhood~$N(v)$ is split into (replaced by) two vertices~$v_1$ and~$v_2$ such
that~$N(v_1) \cup N(v_2) = N(v)$, where each new vertex represents the involvement of~$v$ in one (bi)cluster (or is split further).
An example of such a vertex split is given in \Cref{fig:splitting-example}.
Allowing this operation in addition to the insertion and removal of edges leads to the following problem, first studied by Abu-Khzam et al.~\cite{abu-khzam2025biclusterediting}.
\begin{figure}[t]
\centering
    \begin{tikzpicture}
        \node[draw, circle] (A1) at (0, 1) {$a_1$};
        \node[draw, circle] (A2) at (0, 2) {$a_2$};
        \node[draw, circle] (A3) at (0, 3) {$a_3$};
        \node[draw, circle] (A4) at (0, 4) {$a_4$};
        \node[draw, circle] (A5) at (0, 5) {$a_5$};
        \node[draw, circle] (B1) at (3, 1) {$b_1$};
        \node[draw, circle] (B2) at (3, 2) {$b_2$};
        \node[draw, circle] (B4) at (3, 4) {$b_4$};
        \node[draw, circle] (B5) at (3, 5) {$b_5$};

\draw[gray] (A1) -- (B1);
        \draw[gray] (A1) -- (B2);
        \draw[gray] (A2) -- (B1);
        \draw[gray] (A2) -- (B2);
        \draw[gray] (A3) -- (B1);
        \draw[gray] (A3) -- (B2);
        \draw[gray] (A4) -- (B4);
        \draw[gray] (A5) -- (B5);
        \draw[gray] (A3) -- (B4);
        \draw[gray] (A3) -- (B5);
        \draw[gray] (A4) -- (B5);
        \draw[gray] (A5) -- (B4);
    \end{tikzpicture}
    \hspace*{2cm}
\begin{tikzpicture}
        \node[draw, circle] (A1) at (0, 1) {$a_1$};
        \node[draw, circle] (A2) at (0, 2) {$a_2$};
        \node[draw, circle] (A3) at (0, 3) {\scriptsize$a_3^1$};
        \node[draw, circle] (A3p) at (1.4, 3) {\scriptsize$a_3^2$};
        \node[draw, circle] (A4) at (0, 4) {$a_4$};
        \node[draw, circle] (A5) at (0, 5) {$a_5$};
        \node[draw, circle] (B1) at (3, 1) {$b_1$};
        \node[draw, circle] (B2) at (3, 2) {$b_2$};
        \node[draw, circle] (B4) at (3, 4) {$b_4$};
        \node[draw, circle] (B5) at (3, 5) {$b_5$};

\draw[gray] (A1) -- (B1);
        \draw[gray] (A1) -- (B2);
        \draw[gray] (A2) -- (B1);
        \draw[gray] (A2) -- (B2);
        \draw[gray] (A3) -- (B1);
        \draw[gray] (A3) -- (B2);
        \draw[gray] (A4) -- (B4);
        \draw[gray] (A5) -- (B5);
        \draw[gray] (A3p) -- (B4);
        \draw[gray] (A3p) -- (B5);
        \draw[gray] (A4) -- (B5);
        \draw[gray] (A5) -- (B4);
    \end{tikzpicture}
\caption{An example instance on the left where two operations are needed when only edge insertions and edge removals are allowed. On the right side a solution resulting from a single vertex split is shown.}
  \label{fig:splitting-example}
\end{figure}

\defproblema{Bicluster Editing with Vertex Splitting}{ A bipartite graph $G = (V_1 \uplus V_2, E)$ and an integer $k\geq 0$.}{ Decide whether there exists a bicluster graph which is the result of at most~$k$~edit operations on $G$, where each operation adds an edge, removes an edge, or splits a vertex. }

We mention that there are two natural variants of the \textsc{Bicluster Editing} problem and it makes sense to study both with the additional vertex splitting operation.
In one variant, the input is an arbitrary graph, and the goal is to transform it into a bicluster graph without constraints on which vertices can belong to either side of the resulting bipartite graph.
This model fits applications such as ownership or trading networks, where entities may
simultaneously act on both sides (e.g., as buyers and sellers).
In the second variant, the input graph is bipartite with a fixed bipartition.
This assumption naturally occurs in the aforementioned application in recommender systems, but also in the a context of phylogenetics and other biological datasets.
We focus on the second variant in this paper as we believe that it has the potential to improve recommender systems, especially in terms of explainability.
While many current systems rely
on statistical approaches or machine learning methods and operate as black boxes,
clustering-based approaches can reveal groupings transparently, providing both users and providers with insight into why certain recommendations are made.

We also briefly mention the following problem, introduced by Abu-Khzam, Isenmann, and Merchad~\cite{abu-khzam2025biclusterediting}, as both our kernel and \FPT{} algorithms work for this:

\defproblema{Bicluster Editing with One-Sided Vertex Splitting}{ A bipartite graph $G = (V_1 \uplus V_2, E)$ and an integer $k\geq 0$.}{ Decide whether there exists a bicluster graph which is the result of at most~$k$~edit operations on $G$, where each operation adds an edge, removes an edge, or splits a vertex in $V_1$.}

\paragraph{Related Work.}
We start with related work on vertex splitting.
One of the first uses of vertex splitting as a graph modification operation appeared in the context of graph drawing as a way to make graphs planar.
Eades and de Mendon{\c{c}}a Neto~\cite{EM95} introduced this problem under the name  \textsc{Planar Vertex Splitting}.
Faria et al.~\cite{FFN01} showed that it is \NP-complete, even on cubic graphs.
Ahmed et al.~\cite{AKK22} showed that vertex splitting to bipartite graphs is
\FPT{} and used this result to develop graph drawing algorithms~\cite{AAB+23}.
Vertex splitting was also studied in the context of graph classes of bounded treewidth: Baumann et al.~\cite{BPR24} showed that for MSO$_2$ testable graph classes $\Pi$ with bounded treewidth, vertex splitting to~$\Pi$ is \FPT{}.

Abu-Khzam et al.~\cite{AABD+23} used vertex splitting as an added operation in \textsc{Cluster Editing} to model overlapping clustering. Here, the idea
is that a vertex that belongs to two different clusters can be split into two,
one copy for each of the two clusters it belongs to.
The resulting problem as well as different variants of \textsc{Cluster Editing} with this additional vertex splitting operation have since been studied~\cite{abu-khzam2026complexity2club,BCDRS24,firbas_complexity_2025}.

We continue with related work on \textsc{Bicluster Editing}.
The problem is \NP-complete~\cite{amit2004biclustergraph}, even
on subcubic graphs~\cite{drange2015fastbiclustering}, and has been extensively
studied due to its applications in computational biology~\cite{MO04}.  The
problem has seen steady progress in both kernelization and fixed-parameter
tractability: Protti et al.~\cite{protti_applying_2009} gave an $O(k^2)$ vertex
kernel and a branching algorithm running in $O(4^k + n + m)$ time.  This was
improved by Guo et al.~\cite{GHKZ08} to $O(3.24^k + n + m)$, along with a
smaller kernel, which was later found to contain a minor
flaw~\cite{lafond_even_2020}.  Subsequent works further advanced the
understanding and efficiency of the problem~\mbox{\cite{drange2015fastbiclustering,sun_efficient_2014,XK22}}.
These efforts culminated in the current state-of-the-art results by
Tsur~\cite{Tsur23}, who presented an \FPT{} algorithm running in
$O^*(2.22^k)$ time, and Lafond~\cite{Lafond24} who gave a refined kernel containing at most $4.5k$
vertices.  For a comprehensive overview of early developments and biological
motivations, we refer to the survey by Madeira and Oliveira~\cite{MO04}.

Abu-Khzam, Isenmann, and Merchad~\cite{abu-khzam2025biclusterediting} recently introduced the problems \textsc{Bicluster Editing with Vertex Splitting} and its one-sided variant \BEOVS.
They proved both problems to be \NP-complete, \APX-hard, and intractable under ETH even on bipartite planar graphs of maximum degree three.
On the positive side, they showed that \BEOVS{} admits a kernel with~$O(k^5)$ vertices that can be computed in~$O(n^2)$ time and developed an algorithm running in~$O(5^{3k}k^{2k}n^2)$ for it.

\paragraph{Our Contribution.}
We show that both problems admit kernels with~$O(k^2)$ vertices that can be computed in~$O(n + m)$~time.
We also show that the algorithm by Abu-Khzam, Isenmann, and Merchad for \BEOVS{} contains a flaw and design an algorithm running in~${O(k^{11k} + n + m)}$~time for both \BEVS{} and \BEOVS. These two results solve two open problems raised by
Abu-Khzam, Isenmann, and Merchad~\cite{abu-khzam2025biclusterediting}.

\section{Preliminaries}
\label{sec:prelim}

For a positive integer~$k$, we use~$[k]$ to denote the set~$\{1,2,\ldots,k\}$ of all positive integers up to~$k$.
All logarithms in this paper use 2 as their base.
We use standard graph-theoretic notation and refer the reader to the textbook by Diestel~\cite{diestel2005graphtheory} for standard definitions.  For $G = (V, E)$, we will denote by $n_G$ and $m_G$ the number of vertices and edges, respectively.
All graphs in this work are simple, unweighted, and undirected.
We denote the open neighborhood of a vertex~$v$ by~$N_G(v)$.  When~$G$ is clear from context, we will drop the subscripts.
For an introduction to parameterized complexity, fixed-parameter tractability, and kernelization, we refer the reader to the textbooks by Flum and Grohe~\cite{flum2006parameterizedcomplexity}, Niedermeier~\cite{Niedermeier06}, and Cygan et al.~\cite{CFKLMPPS15}.

A graph~$G=(V_1 \uplus V_2,E)$ is bipartite if each edge in~$E$ has one endpoint in~$V_1$ and one endpoint in~$V_2$---note that any one of those can be empty.
A biclique is a complete bipartite graph, that is, each vertex in~$V_1$ is adjacent to each vertex in~$V_2$.
If~$G=(V_1 \uplus V_2,E)$ is a biclique, then we will also refer to~$G=(V_1,V_2)$ as a biclique for notational convenience.
A \emph{bicluster graph} is a graph in which each connected component is a biclique.
The class of bicluster graphs is exactly the graphs that have neither
$C_3$ nor $P_4$ as an induced subgraph. An (inclusive) vertex split replaces a vertex~$v$ by two vertices~$v_1,v_2$ such that~$N(v_1) \cup N(v_2) = N(v)$.
An example of the splitting operation is given in \Cref{fig:incsplit} and we call the two new vertices~$v_1$ and~$v_2$ \emph{copies} of the original vertex~$v$ (and if~$v_1$ or~$v_2$ are further split in the future, then the resulting vertices are also called copies of~$v$).
For the sake of notational convenience, we also say that~$v$ is a copy of itself.
\begin{figure}[t]
  \centering
  \begin{tikzpicture}[scale=0.9]
        \node[circle, draw, label=left:$a_1$] at (-1,1) (a1) {};
        \node[circle, draw, label=left:$a_2$] at (-1,0) (a2) {};
        \node[circle, draw, label=right:$b_1$] at (1,1) (b1) {} edge(a1) edge(a2);
        \node[circle, draw, label=right:$b_2$] at (1,0) (b2) {} edge(a1) edge(a2);
        \node[circle, draw, label=$c$] at (0,2) (c) {} edge(a1) edge(a2) edge(b1) edge(b2);
        \node[circle, draw, label=below:$v$] at (0,-1) (v) {} edge(a1) edge(a2) edge(b1) edge(b2) edge(c);

        \node at(2.5,.5) {\LARGE $\Rightarrow$};

        \node[circle, draw, label=left:$a_1$] at (4,1) (a1) {};
        \node[circle, draw, label=left:$a_2$] at (4,0) (a2) {};
        \node[circle, draw, label=right:$b_1$] at (6,1) (b1) {} edge(a1) edge(a2);
        \node[circle, draw, label=right:$b_2$] at (6,0) (b2) {} edge(a1) edge(a2);
        \node[circle, draw, label=$c$] at (5,2) (c) {} edge(a1) edge(a2) edge(b1) edge(b2);
        \node[circle, draw, label=left:$v_1$] at (4,-1) (v1) {} edge(b1) edge(b2) edge(c);
        \node[circle, draw, label=right:$v_2$] at (6,-1) (v2) {} edge(a1) edge(a2) edge(c);

        \node at(7.5,.5) {\LARGE $\Rightarrow$};

        \node[circle, draw, label=left:$a_1$] at (9,1) (a1) {};
        \node[circle, draw, label=left:$a_2$] at (9,0) (a2) {};
        \node[circle, draw, label=right:$b_1$] at (11,1) (b1) {} edge(a1) edge(a2);
        \node[circle, draw, label=right:$b_2$] at (11,0) (b2) {} edge(a1) edge(a2);
        \node[circle, draw, label=left:$c_1$] at (9,2) (c1) {} edge(b1) edge(b2);
        \node[circle, draw, label=right:$c_2$] at (11,2) (c2) {} edge(a1) edge(a2);
        \node[circle, draw, label=left:$v_1$] at (9,-1) (v1) {} edge(b1) edge(b2) edge(c2);
        \node[circle, draw, label=right:$v_2$] at (11,-1) (v2) {} edge(a1) edge(a2) edge(c1);
    \end{tikzpicture}
    \caption{An illustration of two (inclusive) vertex splits (in a
      non-bipartite graph).  In the first split, the vertex~$v$ is replaced
      by~$v_1$ and~$v_2$, with the vertices~$a_1,a_2,b_1$ and~$b_2$ being
      adjacent to exactly one of the two vertices and~$c$ being adjacent to
      both.  In the second, $c$ is split into~$c_1$ and~$c_2$.}
    \label{fig:incsplit}
\end{figure}
An edit sequence is a sequence of operations, where each operation is (i) the addition of an edge, (ii) the removal of an edge, or (iii) the inclusive split of a vertex.
We denote the graph resulting from applying an edit sequence~$\sigma$ to a graph~$G$ by~$\ver{G}{\sigma}$.

A \emph{critical independent set} is a maximal set~$I$ of vertices such that~$N(u)=N(v)$ for all~$u,v \in I$.
Equivalently, two vertices~$u$ and~$v$ belong to the same critical independent set if and only if they are false twins.
It is known that critical independent sets form a partition of the vertex set of a graph and that all critical independent sets can be computed in linear time \cite{HM91}.
The \emph{critical independent set quotient graph}~$\mathcal I$ of~$G$ contains a node for each critical independent set in~$G$ and two nodes are adjacent if and only if the two respective critical independent sets $I_1$ and $I_2$ are adjacent, that is, there is an edge between each vertex in~$I_1$ and each vertex in~$I_2$.
Note that by the definition of critical independent sets, this is equivalent to the condition that at least one edge~$\{u,v\}$ with~$u \in I_1$ and~$v \in I_2$ exists.
To avoid confusion, we will call the vertices in~$\mathcal{I}$ \emph{nodes} and in~$G$ \emph{vertices}.

\section{Structural Results}

In this section, we prove two structural results about optimal solutions to
\BEVS{} and \BEOVS{} that are used in later proofs.
The first state that the bipartitions of the final bicluster graph follow the bipartition of the input graph, that is, no edges between copies of vertices in~$V_1$ and~$V_2$ are added.

\begin{lemma}
    \label{lem:noconversion}
    Let $(G=(V_1 \uplus V_2,E), k)$ be a yes-instance of \BEVS{} or \BEOVS.
    Then, there exists a solution~$\sigma$ of length at most~$k$ such that for each biclique~$H_j=(A_j,B_j)$ in $\ver{G}{\sigma}$ it holds that~$A'_j \subseteq V_1$ and~$B'_j \subseteq V_2$, where~$A'_j$ and~$B'_j$ consist of all vertices~$v \in V_1 \uplus V_2$ such that~$A_j$ and $B_j$ contain a single copy of~$v$, respectively.
\end{lemma}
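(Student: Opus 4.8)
The plan is to take an arbitrary optimal solution and transform it into one with the claimed ``no side-swapping'' property without increasing its length. The key observation is that in any bicluster graph, each biclique $H_j = (A_j, B_j)$ has a well-defined bipartition into two sides. I want to argue that we may assume the two sides align with the original bipartition $V_1 \uplus V_2$: every copy originating from a vertex in $V_1$ ends up on the ``$A$-side'' and every copy from $V_2$ ends up on the ``$B$-side.'' The heart of the matter is that an edge between two copies of $V_1$-vertices (or two copies of $V_2$-vertices) can never have been present in $G$, since $G$ is bipartite and splitting only duplicates neighborhoods; such an edge must therefore have been \emph{inserted} by $\sigma$, and I will show these insertions are wasteful.

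First I would fix a solution $\sigma$ of length at most $k$ and look at the resulting bicluster graph $\ver{G}{\sigma}$. For each biclique component $H_j=(A_j,B_j)$, consider the vertices of $A_j$ and $B_j$ and trace each back to its original vertex in $V_1 \uplus V_2$ via the copy relation. Define a vertex $v$ to be \emph{misplaced} if one of its copies lies on the side of some biclique not matching its original part (a $V_1$-origin copy in some $B_{j}$, or a $V_2$-origin copy in some $A_{j}$). The main step is a charging/exchange argument: I would show that if any misplaced copy exists, I can produce a solution that is no longer and has strictly fewer misplacements, so by induction (or by taking a solution minimizing the number of misplacements among all shortest solutions) we reach one with none, which is exactly the statement. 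Since a biclique is determined by its two sides and a split only ever partitions a neighborhood, the combined neighborhood of all copies of a vertex equals its original neighborhood in $G$, which I will use to bound the edit cost of re-routing a misplaced copy back to its proper side.

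The concrete exchange I have in mind: suppose a biclique $H_j$ contains copies of $V_1$-vertices on its $B_j$ side. Because $G$ is bipartite, any edge inside $\ver G\sigma$ between a $V_1$-origin copy in $B_j$ and a $V_1$-origin copy in $A_j$ corresponds to a \emph{non-edge} of $G$ (both endpoints trace to $V_1$), so it was inserted. The idea is to ``relabel'' which side of the biclique each copy sits on so that it respects the original partition, and then verify that the number of insertions/deletions needed to realize this relabeled target is no larger than in $\sigma$; in the one-sided variant I must additionally check that I only ever split $V_1$-vertices, which the relabeling preserves since it does not introduce new splits. Intuitively, edges crossing the bipartition ``for free'' in the original graph remain the cheap ones, and forcing the biclique sides to match $V_1/V_2$ only removes the need for the cross-origin insertions.

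The step I expect to be the main obstacle is making the exchange argument clean when a single biclique mixes origins on \emph{both} sides and when copies of the same original vertex are distributed across several bicliques. The danger is that naively moving one copy to its correct side could break the biclique structure elsewhere or could require re-splitting, inflating the count. To handle this I would argue more globally rather than vertex-by-vertex: take a shortest solution that minimizes, say, the total number of copies sitting on the ``wrong'' side (lexicographically after minimizing length), and derive a contradiction from the existence of any wrong-side copy by exhibiting a strictly better solution under this measure. Framing the target bicluster graph as a choice of, for each original vertex, how to distribute its neighborhood among bicliques on its correct side, and then bounding edits as symmetric differences between $G$'s adjacency and this target, should let me show that the partition-respecting target is always at least as cheap, completing the proof.
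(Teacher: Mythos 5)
Your starting point is the same as the paper's: in $\ver{G}{\sigma}$, any edge joining two copies whose originals lie in the same part of $V_1 \uplus V_2$ cannot be inherited from $G$ (splits only duplicate neighborhoods), so it must have been inserted by $\sigma$, and these insertions are the wasteful ones. The gap is in your repair step. Writing a violating biclique as $H_j=(A_j,B_j)$ with $A_j = A_1 \cup A_2$ and $B_j = B_1 \cup B_2$, where $A_1, B_1$ are the $V_1$-origin copies and $A_2, B_2$ the $V_2$-origin copies, your ``concrete exchange'' keeps $H_j$ as a single biclique and relabels its sides as $(A_1 \cup B_1,\, A_2 \cup B_2)$. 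This exchange can strictly \emph{increase} the cost: it saves the $|A_1||B_1| + |A_2||B_2|$ wasteful insertions, but the relabeled biclique needs every edge between $A_1$ and $A_2$ and between $B_1$ and $B_2$, and none of these edges exist in $\ver{G}{\sigma}$ (they join two vertices on the same side of $H_j$); if the corresponding originals are non-adjacent in $G$, these are all fresh insertions, costing $|A_1||A_2| + |B_1||B_2|$. The net change in length is $(|A_1|-|B_2|)(|A_2|-|B_1|)$, which is strictly positive for instance when $|A_1| = |A_2| = 10$ and $|B_1| = |B_2| = 1$, where your relabeled target needs $81$ more operations than $\sigma$. So the claim that ``the partition-respecting target is always at least as cheap'' fails for the target you construct, and the lexicographic-minimization framing does not rescue the argument, because at no point do you exhibit a valid improving exchange.

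The missing idea in the paper's proof is that the repaired solution should not keep $H_j$ connected at all. One simply removes from $\sigma$ the insertions of the edges between $A_1$ and $B_1$ and between $A_2$ and $B_2$; the component then falls apart into the two bicliques $(A_1, B_2)$ and $(B_1, A_2)$, each of which respects the original bipartition (in the second one, $B_1$ plays the role of the ``left'' side). No new operation is introduced, nothing outside $H_j$ is affected, and the resulting sequence is \emph{strictly shorter} than $\sigma$, so this replacement can be applied to all violating bicliques simultaneously---no induction over a misplacement measure is needed, and the argument carries over verbatim to \BEOVS{} since no splits are added or moved.
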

\begin{proof}
    Let~$\sigma$ be a solution of size~$k' \leq k$.
    We will construct a solution~$\sigma'$ of size~$k'' \leq k'$ that satisfies the listed requirements.
    If~$\sigma$ already satisfies the requirements, then we are done.
    So assume that it does not and let~$H_j=(A_j,B_j)$ be a biclique in~$\ver{G}{\sigma}$ that does not satisfy the requirement.
    Then, we assume without loss of generality that~$A'_j \cap V_1 \neq \emptyset$, $B'_j \cap V_2 \neq \emptyset$, and~$B'_j \cap V_1 \neq \emptyset$ (the case where~$A'_j \cap V_2 \neq \emptyset$ is symmetric even for \BEOVS).
    We partition~$B_j$ into two sets~$B_1$ and~$B_2$ such that~$B_1$ contains all copies of vertices in~$V_1$ and~$B_2$ contains all copies of vertices in~$V_2$.
    We also partition~$A_j$ into~$A_1$ and~$A_2$, where~$A_2$ is potentially empty.
    Note that all edges between vertices in~$A_1$ and~$B_1$ and all edges between vertices in~$A_2$ and~$B_2$ are added in~$\sigma$ since~$V_1$ and~$V_2$ induce independent sets in~$G$.
    Consider the sequence~$\sigma'$ that is equal to~$\sigma$ except for the fact that it does not create these edges.
    Note that~$\sigma'$ is strictly shorter than~$\sigma$ so it only remains to show that~$\sigma'$ is a valid solution.

    To this end, consider the effect on~$H_j = (A_j,B_j)$.
    All edges between vertices in~$A_1$ and in~$B_2$ are still there.
    So are all edges between vertices in~$A_2$ and in~$B_1$.
    Since~$H_j$ is a biclique, there are no edges between vertices in~$A_1$ and in~$A_2$ or between vertices in~$B_1$ and in~$B_2$.
    Moreover, by construction, there are no edges between vertices in~$A_1$ and in~$B_1$ or between vertices in~$A_2$ and in~$B_2$ in~$\ver{G}{\sigma'}$.
    Thus, the vertices in~$H_j$ now form the two bicliques~$H_j^1=(A_1,B_2)$ and~$H_j^2 = (B_1,A_2)$.
    Since~$H_j$ was chosen arbitrarily, the same holds for all bicliques that did not initially satisfy the requirement.
    Moreover, since~$A'_1,B'_1 \subseteq V_1$ and~$A'_2,B'_2 \subseteq V_2$ by construction, all new bicliques also satisfy the requirement of the lemma.
    This concludes the proof.
\end{proof}

\bigskip

The following lemma is a generalization of a lemma due to Guo et al.~\cite{GHKZ08} in the context of vertex splitting.
It states that two copies of the same vertex are always contained in two different bicliques in (the graph obtained by) any optimal solution.

\begin{lemma}
\label{lem:cis}
  Let $(G=(V_1 \uplus V_2,E), k)$ be a yes-instance of \BEVS{} or \BEOVS.
  Then, there exists a solution~$\sigma$ of length at most~$k$ such that for each critical independent set~$I_i \subseteq V_1$ in~$G$ and each biclique~$H_j=(A_j,B_j)$ in~$\ver{G}{\sigma}$ it holds that~$B_j$ does not contain a copy of any vertex in~$I_i$ and~$A_j$ contains exactly one copy of each vertex in~$I_i$ or does not contain any copy of a vertex in~$I_i$.
  The same applies to critical independent sets~$I_p \subseteq V_2$ with the roles of~$A_j$ and~$B_j$ swapped.
\end{lemma}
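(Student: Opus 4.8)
The plan is to start from a solution $\sigma$ already in the normal form guaranteed by \Cref{lem:noconversion}, so that copies of $V_1$-vertices occur only on the $A$-side and copies of $V_2$-vertices only on the $B$-side of each biclique. This immediately yields the first half of the claim---that $B_j$ contains no copy of a vertex in $I_i \subseteq V_1$ (and symmetrically $A_j$ contains no copy of a vertex in $I_p \subseteq V_2$). Hence the remaining content of the lemma is the \emph{synchronization} statement: the copies of the vertices of a critical independent set should occur in exactly the same bicliques, one copy per biclique.

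First I would argue that we may assume no biclique $H_j=(A_j,B_j)$ contains two copies of the same vertex $v$ on the same side. If $A_j$ held two copies of $v\in V_1$, then---since every $A$-side vertex of a biclique is adjacent to all of $B_j$ and, by \Cref{lem:noconversion}, to nothing else---these two copies have identical neighborhoods, so merging them into a single copy yields a valid solution performing one fewer split and exactly the same edge edits, hence of no greater length. Iterating, each biclique contains at most one copy of each vertex.

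The core is an exchange argument exploiting that false twins are interchangeable. The key step is to charge the total cost of $\sigma$ to the original vertices so that (i) each split is charged to the vertex it splits, and (ii) each edge edit---whose endpoints are, by \Cref{lem:noconversion}, a copy of a $V_1$-vertex and a copy of a $V_2$-vertex---is charged to its $V_1$-endpoint's original vertex. With this charging, the cost of a vertex $v\in I_i\subseteq V_1$ equals $(\text{number of copies of }v)-1$ plus the number of edited edges incident to copies of $v$, and this quantity depends only on $N(v)=N_i$ and on the set of $B$-sides of the bicliques hosting $v$'s copies, not on the identity of $v$ within $I_i$. Letting $u\in I_i$ minimize this charge, I would then delete every copy and incident edit of each other $v\in I_i$ and instead split $v$ and place its copies exactly as was done for $u$, connecting each copy to the same $B$-side. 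Because $u$ and $v$ start with the identical neighborhood $N_i$, this replication uses exactly the same number of operations, so $v$'s charge drops to $u$'s and the total does not increase; afterwards all vertices of $I_i$ occur in the same bicliques with one copy each. Since the reassignment touches only copies of $I_i$ and leaves every $B$-side (and thus every $V_2$-vertex and every other critical independent set) untouched, the modified sequence is still a valid solution and the construction can be applied to each critical independent set in turn. The case $I_p\subseteq V_2$ is symmetric; in \BEOVS{} the $V_2$-vertices cannot be split, so each has a single copy and synchronization reduces to placing all of $I_p$ into one common biclique, which the same exchange delivers.

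The step I expect to be the main obstacle is making the charging rigorous under the \emph{inclusive} split operation: since an original edge may be inherited by several copy-pairs for free, I must verify that the number of edge edits incident to the copies of $v$ is genuinely a function of $N_i$ and the chosen host bicliques alone, and that the replicated placement for $v$ is realizable with that same count. I also need to confirm that the reassignment never turns $\ver{G}{\sigma}$ into a non-bicluster graph---which is straightforward, as adding or removing $A$-side vertices of a biclique keeps it a biclique, and an emptied $A$-side merely leaves isolated $B$-side vertices that are themselves bicliques---and that the per-vertex charges across different critical independent sets do not interfere, which follows from the disjointness of critical independent sets together with the fact that all $B$-sides are preserved.
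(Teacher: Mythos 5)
Your proposal is correct and takes essentially the same route as the paper's proof: normalize via \Cref{lem:noconversion}, eliminate duplicate copies inside a single biclique, pick a minimum-cost representative in each critical independent set, and make every other vertex of that set mimic the representative by mirroring its operations, relying on the fact that bicluster graphs are closed under vertex deletion and under adding false twins. The differences are cosmetic---the paper performs the exchange one vertex at a time with a potential-function termination argument rather than batching a whole critical independent set, and it resolves the ``inclusive split'' worry you flag exactly as you suggest: it literally copies the representative's operation subsequence onto the other vertex (keeping copies in lockstep even when third vertices are split), so one never needs the charge to be a function of $N_i$ and the host bicliques, only that the mirrored subsequence has the same length.
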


\begin{proof}
  Let $\hat\sigma$ be a solution for~$(G,k)$ that satisfies \Cref{lem:noconversion}.
  For each critical independent set $I_i$, we select a representative vertex $r_i \in I_i$ by picking any vertex in~$I_i$ with the fewest number of appearances in $\hat\sigma$, that is, a vertex which minimizes the sum of incident added edges, incident removed edges, and vertex splits in~$\hat\sigma$.
  Note that due to the assumption that~$\hat\sigma$ satisfies \Cref{lem:noconversion}, it holds for any critical independent set~$I_i \subseteq V_1$, any critical independent set~$I_{i'}$, and each biclique~$H=(A_j,B_j)$ in~$\ver{G}{\hat\sigma}$ that~$A_j$ does not contain a copy of~$I_{i'}$ and~$B_j$ does not contain a copy of a vertex in~$I_i$.
  If it also holds for each critical independent set~$I_i \subseteq V_1$, each critical independent set~$I_{i'} \subseteq V_2$, and each biclique~$H=(A_j,B_j)$ in~$\ver{G}{\hat\sigma}$ that (i)~$A_j$ does not contain any copy of a vertex in~$I_i$ or~$A_j$ contains exactly one copy of each vertex in~$I_i$ and (ii)~$B_j$ does not contain any copy of a vertex in~$I_{i'}$ or~$B_j$ contains exactly one copy of each vertex in~$I_{i'}$, then~$\hat\sigma$ satisfies all requirements of the lemma statement, and we are done.

  Otherwise, there exists a biclique $H_j=(A_j,B_j)$ in~$\ver{G}{\hat\sigma}$ such that~(i)~$A_j$ contains two copies of some vertex~$v \in I_i$ or (ii) there exists a critical independent set~$I_i$ (or~$I_{i'}$) and two vertices~$u,v$ in it such that~$A_j$ ($B_j$) contains a copy of~$u$ but no copy of~$v$.
  In case (i), we remove all operations from~$\hat\sigma$ involving one of the two copies.
  Note that this results in a solution of strictly shorter length as we do not need to create this additional copy and bicliques are closed under vertex deletion.
  In case (ii), we assume without loss of generality that~$A_j$ contains a copy of~$u$ but no copy of~$v$ as the case where~$B_j$ contains a copy of~$u$ but no copy of~$v$ is analogous.
  Then, there also exists such a pair of vertices where one of~$u$ and~$v$ is~$r_i$: if~$A_j$ contains a copy of~$r_i$, then we can take the pair~$(r_i,v)$ and if~$A_j$ does not contain a copy of~$r_i$, then we can take the pair~$(u,r_i)$.
  Let~$w \in \{u,v\}$ be the other vertex.

  We find a new optimal solution by removing all operations involving~$w$ and copying all operations including $r_i$ and replacing $r_i$ by $w$ in the copy.
  For the sake of notational convenience, we say that if some operation~$e_p$ involves the~$j$\textsuperscript{th} copy of~$r_i$, then the operation~$e_{p+1}$ is a copy of~$e_p$ and it uses the $j$\textsuperscript{th} copy of~$w$.
  Moreover, when splitting another vertex (not $r_i$ or $w$), then we treat each copy of $w$ the same as the corresponding copy of~$r_i$.
  Note that since~$\hat\sigma$ satisfies \Cref{lem:noconversion}, it does not create any edge between a copy of~$w$ and a copy of~$r_i$.
  Our modifications to~$\hat\sigma$ also do not add such edges so the resulting sequence~$\sigma'$ still satisfies \Cref{lem:noconversion}.
  Since $w$ is involved in at least as many operations as~$r_i$ (recall that $r_i$ was picked as having the fewest number of appearances in~$\hat\sigma$), it holds that~$\sigma'$ will be at most as long as~$\hat\sigma$.

  We next show that~$\sigma'$ is also a solution.
  Note that the graph~$\ver{G}{\sigma'}$ is the graph obtained from~$\ver{G}{\hat\sigma}$ by deleting all copies of~$w$, and then adding a false twin for each copy of~$r_i$.
  Since the class of bicluster graphs is closed under vertex deletion and under adding false twins, it follows that~$\ver{G}{\sigma'}$ is also a bicluster graph.
  Moreover, the number of vertices that behave differently than their representative is one smaller in~$\sigma'$ compared to~$\hat\sigma$.
  Hence, repeating the above procedure at most~$n$ times results in an optimal solution~$\sigma$ as stated in the lemma.
\end{proof}

In the following section, we will use \Cref{lem:cis} to develop a polynomial kernel, as well as a~$2^{O(k \log k)}(n+m)$-time algorithm for both problem variants, and importantly, the kernelization algorithms run in linear time in the input.

\section{A Polynomial Kernel}
\label{sec:kernel}

In this section, we prove that \BEVS{} and \BEOVS{} admit kernels with~$O(k^2)$ vertices which can be computed in linear time.
This improves upon the kernel with~$O(k^5)$ vertices computable in~$O(n^2)$ time by Abu-Khzam, Isenmann, and Merchad~\cite{abu-khzam2025biclusterediting}.
We note that Lafond recently proved a $4.5k$-vertex kernel for \textsc{Bicluster Editing}~\cite{Lafond24}.
However, the kernel does not generalize to \BEVS{} or \BEOVS.
The kernel consists of three rules and then an argument that if the resulting graph contains more than~$4.5k$ vertices, then it is a no-instance.
The graph in \Cref{fig:counterexample} does not allow for the application of any of the three rules of Lafond for~$k=2$ but it is a yes-instance since the two vertices~$u,v$ can each be split once to obtain a bicluster graph.

\begin{figure}[t]
  \centering
  \begin{subfigure}[t]{.48\textwidth}
    \centering
    \begin{tikzpicture}
        \node[circle, draw] at (-1,3.5) (a) {};
        \node[circle, draw] at (-1,3) (b) {};
        \node[circle, draw,label=left:$u$] at (-1,2) (u) {};
        \node[circle, draw,label=left:$v$] at (-1,1.5) (v) {};
        \node[circle, draw] at (-1,.5) (c) {};
        \node[circle, draw] at (-1,0) (d) {};

        \node[circle, draw] at (1,2.75) {} edge(a) edge(b) edge(u) edge(v);
        \node[circle, draw] at (1,2.25) {} edge(a) edge(b) edge(u) edge(v);
        \node[circle, draw] at (1,1.25) {} edge(c) edge(d) edge(u) edge(v);
        \node[circle, draw] at (1,.75) {} edge(c) edge(d) edge(u) edge(v);
    \end{tikzpicture}
    \caption{A graph in which none of the reduction rules used by Lafond~\cite{Lafond24} apply which is a yes-instance for~$k=2$ but contains more than~$9$ vertices.}
    \label{fig:counterexample}
\end{subfigure}\hfill \begin{subfigure}[t]{.48\textwidth}
  \centering
    \begin{tikzpicture}
        \node[circle, draw] at (-1,3.5) (a1) {};
        \node[circle, draw] at (-1,3) (a2) {};
        \node[circle, draw] at (-1,2.5) (a3) {};
        \node[circle, draw] at (-1,2) (a4) {};
        \node[circle, draw] at (-1,1.5) (a5) {};
        \node[circle, draw] at (-1,1) (a6) {};
        \node[circle, draw] at (-1,.5) (a7) {};
        \node[circle, draw] at (1,3.5) (b1) {};
        \node[circle, draw] at (1,3) (b2) {};
        \node[circle, draw] at (1,2.5) (b3) {};
        \node[circle, draw] at (1,2) (b4) {};
        \node[circle, draw] at (1,1.5) (b5) {};
        \node[circle, draw] at (1,1) (b6) {};
        \node[circle, draw] at (1,.5) (b7) {};
        \node[circle, draw] at (-1,-.5) (c1) {};
        \node[circle, draw] at (-1,-1) (c2) {};
        \node[circle, draw] at (-1,-1.5) (c3) {};
        \node[circle, draw] at (1,-.5) (d1) {};
        \node[circle, draw] at (1,-1) (d2) {};
        \node[circle, draw] at (1,-1.5) (d3) {};
        \node at(-2,2) {$A$};
        \node at(2,2) {$B$};
        \foreach \i in {1,2,...,7}{
            \foreach \j in {1,2,...,7}{
                \draw (a\i) -- (b\j);
            }
        }
        \foreach \i in {1,2,...,7}{
            \foreach \j in {1,2,3}{
                \draw (a\i) -- (d\j);
                \draw(c\j) -- (b\i);
            }
        }
    \end{tikzpicture}
    \caption{A graph in which removing a vertex from either set~$A$ or~$B$ reduces the size of an optimal solution. An optimal solution splits each vertex in~$A$ once and has therefore size~$7$. Removing one vertex from~$A$ or~$B$ drops the solution size to~$6$ as only the vertices in the smaller set need to be split.}
    \label{fig:counterexample2}
  \end{subfigure}
  \caption{Examples showing that previous kernelization techniques for \textsc{Bicluster Editing} do not extend to our setting.}
\end{figure}

Moreover, a common approach for linear kernels for \textsc{Bicluster Editing} is to argue that if the second neighborhood\footnote{The second neighborhood of $v$, $N^2(v)$ is the set of vertices at distance 2 from $v$.} of a critical independent set~$I_i$ is smaller than~$|I_i|$, then one can safely remove a vertex from~$I_i$.
The graph in \Cref{fig:counterexample2} shows that such an approach does not work for \BEVS.

However, a quadratic kernel for \BEVS{} (or \BEOVS) can be achieved relatively easily.

\begin{theorem}
    \label{thm:kernel}
    \BEVS{} and \BEOVS{} each admit a kernel with at most~$6k(k+1)$ vertices and at most~$6k$
    critical independent sets, which can be computed in linear time.
\end{theorem}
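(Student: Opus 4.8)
The plan is to design reduction rules that operate at the level of critical independent sets, exploiting \Cref{lem:cis} to argue that an entire critical independent set behaves uniformly in some optimal solution. The first rule deletes every connected component of $G$ that is already a biclique (this also removes isolated vertices, which are degenerate bicliques with one empty side). Such a component requires no operation, and since connecting it to another component only costs edits without ever being forced, there is an optimal solution leaving it untouched; hence deleting it preserves the answer, and after computing all critical independent sets it runs in linear time.

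The second rule trims every critical independent set $I_i$ to at most $k+1$ vertices by deleting surplus twins. Its safety is the first place a genuine argument is needed. The forward direction is immediate: restrict a solution, drop the operations on a deleted twin, and use that bicluster graphs are closed under vertex deletion. For the backward direction I would observe that once $|I_i|\ge k+1$, the set $I_i$ cannot be split in any solution of size at most $k$: by \Cref{lem:cis} all its vertices are synchronized, so splitting it into $t\ge 2$ bicliques costs at least $|I_i|(t-1)\ge k+1$ operations. For the same reason the common per-vertex edit cost $c$ of $I_i$ must be $0$, since the total $|I_i|\cdot c$ would otherwise exceed $k$. Thus $I_i$ sits unedited inside a single biclique whose other side is exactly $N_G(I_i)$, so a reinserted false twin needs no further edit. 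The same reasoning applies to $V_2$-sets, noting that in \BEOVS{} they cannot be split anyway.

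The heart of the proof is the counting argument bounding the number of critical independent sets by $6k$ in any instance reduced with respect to both rules that is a yes-instance; if a reduced instance has more than $6k$ such sets we simply return a trivial no-instance. Fix an optimal solution $\sigma$ of size $a+d+s\le k$ ($a$ additions, $d$ deletions, $s$ splits) satisfying \Cref{lem:noconversion} and \Cref{lem:cis}, and call a critical independent set \emph{active} if some operation of $\sigma$ involves one of its vertices. Counting operation–incidences gives at most $s+2(a+d)$ active sets. After the first rule every biclique of $\ver{G}{\sigma}$ contains an active vertex (an all-inactive biclique would be a biclique component of $G$), so the number of bicliques is at most the number of (active set, biclique) incidences, namely $\#\text{active}+\sum_{\text{active}}(t_I-1)\le \#\text{active}+s$, where $t_I$ is the number of bicliques containing $I$ and each $I$ needs at least $t_I-1$ splits of its copies, so these split-counts sum to at most $s$. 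Finally, an inactive set is unsplit and hence lies in exactly one biclique, and all inactive sets on one side of a fixed biclique share the same neighborhood (the entire opposite side) and therefore coincide; so each biclique hosts at most two inactive sets. Combining, the total number of critical independent sets is at most $\#\text{active}+2(\#\text{active}+s)=3\,\#\text{active}+2s\le 5s+6(a+d)\le 6(a+d+s)\le 6k$.

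Putting the pieces together, a reduced yes-instance has at most $6k$ critical independent sets, each of size at most $k+1$, hence at most $6k(k+1)$ vertices, and all rules run in linear time. I expect the main obstacle to be this counting step—pinning down that every final biclique contains an active vertex and hosts at most two inactive critical independent sets—since this is exactly where the structural lemmas, the first reduction rule, and the split budget must be combined to push the constant down to~$6$.
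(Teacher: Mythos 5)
Your proof is correct and follows essentially the same route as the paper: the same two reduction rules (delete biclique components, trim each critical independent set to $k+1$ vertices via \Cref{lem:cis}), and the same counting argument---every biclique in the solution contains a touched vertex, each biclique side hosts at most one untouched critical independent set---yielding the $6k$ bound. The only differences are cosmetic: you organize the count by critical-independent-set incidences with explicit $(a,d,s)$ bookkeeping, while the paper counts touched vertices ($|X|\le 2k$) directly, and your safety argument for the trimming rule spells out details the paper leaves terse.
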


\begin{proof}
    We first compute all critical independent sets and the critical independent set quotient graph~$\mathcal{I}$ of~$G$.
    Next, if a connected component in~$\mathcal{I}$ is an isolated vertex or two vertices connected by an edge, then we remove all vertices in the corresponding critical independent sets.

    We formalize this as the following reduction rule:

    \begin{redrule}
        \label{red:1}
        Remove all vertices contained in critical independent sets corresponding to isolated nodes or isolated edges in~$\mathcal{I}$.
    \end{redrule}
    Note that the former corresponds to a set of isolated vertices in~$G$ and the latter corresponds to a connected component that is a complete bipartite graph.
    Since both are bicliques and not connected to the rest of the graph, we can safely remove them.

    Next, we reduce the size of each critical independent set.
    \begin{redrule}
      \label{red:2}
      If~$|I_i| > k+1$ for some critical independent set~$I_i$, then remove an arbitrary vertex in~$I_i$ from~$G$.
    \end{redrule}
    We next prove that the reduction rule is safe.
    To this end, let~$G'$ be the graph obtained after removing a vertex~$u$ as described in the reduction rule.
    First assume that~$(G,k)$ is a yes-instance (of either problem).
    Since removing a vertex can never increase the distance from being a bicluster graph, $(G',k)$ is clearly also a yes-instance.
    Now assume that~$(G',k')$ is a yes-instance.
    Combining \Cref{lem:cis} with the observation that~$|I_i|$ remains strictly greater than $k$ after the removal of~$u$, we can assume without loss of generality that there is an optimal solution~$\sigma$, which does not add or remove any edges incident to vertices in~$I_i$ and also does not split any vertex in~$I_i$.
    Hence, adding the vertex~$u$ back does not change the fact that~$\sigma$ still results in a bicluster graph.

    We next analyze the running time.
    Computing all critical independent sets and the critical independent set quotient graph~$\mathcal{I}$ of~$G$ takes linear time~\cite{HM91}.
    Finding isolated vertices and edges in~$\mathcal{I}$ also takes linear time.
    Iterating over all critical independent sets and removing vertices also takes linear time.
    Finally, checking whether~$\mathcal{I}$ contains more than~$6k$ nodes after performing the above reduction rules exhaustively and returning a trivial no-instance if that is the case takes constant time.
    Since each step takes linear time, the entire kernel can be computed in linear time.

    It remains to show that if a graph~$G$ does not allow for the application of either of the two above reduction rules and contains more than~$6k$ critical independent sets, then~$(G,k)$ is a no-instance of \BEVS{} and of \BEOVS.
We show that if a graph does not allow for an application of either Reduction rules \ref{red:1} or \ref{red:2} and contains more than~$6k$ critical independent set, then we are dealing with a no-instance.
    Note that since \Cref{red:2} does not apply, this also shows that the number of vertices is upper bounded by~$6k(k+1)$.
    Afterwards, we analyze the running time of computing the kernel.

    Assume that there is a solution sequence~$\sigma$ of length at most~$k$ resulting in a bicluster graph~${H = (V',E')}$.
    We assume without loss of generality that~$\sigma$ satisfies \Cref{lem:noconversion,lem:cis}.
    We will partition the vertices in~$V'$ into two sets.
    The set~$X$ contains all vertices that are \emph{touched} by $\sigma$, that is, vertices that are the result of a vertex-split operation or vertices that are incident to an edge that is added or removed by~$\sigma$.
    Note that~$|X| \leq 2k$ as each edge is incident to two vertices and each vertex split adds two new vertices to the graph.
    The set~$Y$ contains all other vertices in~$V'$.
    Since vertices in~$Y$ are not split, they are also part of the original graph.
    Consider a connected component (a biclique)~$H_j=(A_j,B_j)$ in~$H$.
    Since \Cref{red:1} does not apply, at least one vertex in~$H_j$ is contained in~$X$.
    Moreover, each side of~$H_j$ contains vertices in~$Y$ from at most one critical independent set as shown next.
    Assume towards a contradiction that~$A_j$ contains vertices~$a_1,a_2 \in Y$ belonging to different critical independent sets.
    Since~$a_1,a_2 \in Y$, no edges incident to these two vertices are added or removed during~$\sigma$.
    Hence,~$B_j$ consists of exactly one copy of vertices in~$N(a_1)$ and the same for~$a_2$.
    However, these two neighborhoods are different as~$a_1$ and~$a_2$ belong to different critical independent sets (and are both contained in~$V_1$).
    Thus, $A_j$ only contains vertices in~$Y$ from one critical independent set and the same is true for~$B_j$ by the same argument.
    Hence, the number of critical independent sets with vertices in~$Y$ is at most~$2|X| \leq 4k$.
    Combined with the fact that~$|X| \leq 2k$ and therefore at most~$2k$ critical independent sets (in the original graph) contain vertices with copies in~$X$, this shows that the number of critical independent sets in the graph before applying~$\sigma$ is at most~$6k$.
    This concludes the proof.
\end{proof}

We mention that the question whether \BEVS{} and/or \BEOVS{} admit kernels with~$O(k)$ vertices is an interesting open problem for the future.

\section{An FPT Algorithm}
\label{sec:algorithm}

\Cref{thm:kernel} implies that \BEVS{} is fixed-parameter tractable since we can compute the kernel in linear time and then solve the kernel using brute force.
For \BEOVS, Abu-Khzam et al.~\cite{abu-khzam2025biclusterediting} developed an algorithm running in time~$O(5^{3k}k^{2k} n^2)$.
Unfortunately, we show that the algorithm contains a flaw and show a counter example where the algorithm does not compute an optimal solution.
Afterwards, we will present an algorithm that solves \BEVS{} and \BEOVS{} in~$O(k^{11k} + n + m)$ time.
This algorithm is an adaptation of an algorithm with similar running time due to Abu-Khzam et al.~\cite{AABD+23}.

The counter example for the algorithm by Abu-Khzam et al.~\cite{abu-khzam2025biclusterediting} is given in Figure~\ref{fig:counterexample3}.
In short,
the algorithm iteratively finds an induced~${P_4 = (a,b,c,d)}$, where~${a,c \in V_1}$ and~$c$ is not marked.
Then, it branches into five different cases: deleting any of the three edges, adding the edge~$\{a,d\}$, or marking~$c$.
Marking corresponds to splitting the vertex~$c$ but the authors delay the decision on how to split~$c$ until all edge modifications are guessed and then present an algorithm that optimally splits all marked vertices.
The problem is that marking vertices and only searching for~$P_4$s in which~$c$ is unmarked can make it impossible for the algorithm to delete an edge that has to be removed in any optimal solution.
We develop a different algorithm with a slightly worse running time in the parameter but better dependency on the input size for both \BEVS{} and \BEOVS.

\begin{figure}[h]
    \centering
    \begin{tikzpicture}
        \node[circle,draw,label=left:$A_1$,label=$1$] at (-1,6) (a1) {};
        \node[circle,draw,label=left:$A_2$,label=$1$] at (-1,5) (a2) {};
        \node[circle,draw,label=left:$A_3$,label=$1$] at (-1,4) (a3) {};
        \node[circle,draw,label=left:$A_4$,label=$1$] at (-1,3) (a4) {};
        \node[circle,draw,label=left:$A_5$,label=$5$] at (-1,2) (a5) {};
        \node[circle,draw,label=left:$A_6$,label=$5$] at (-1,1) (a6) {};
        \node[circle,draw,label=right:$B_1$,label=$5$] at (1,6) (b1) {} edge(a1) edge(a2);
        \node[circle,draw,label=right:$B_2$,label=$5$] at (1,5) (b2) {} edge(a2) edge(a3);
        \node[circle,draw,label=right:$B_3$,label=$1$] at (1,4) (b3) {} edge(a3) edge(a4);
        \node[circle,draw,label=right:$B_4$,label=$5$] at (1,3) (b4) {} edge(a4);
        \node[circle,draw,label=right:$B_5$,label=$5$] at (1,2) (b5) {} edge(a4) edge(a5);
        \node[circle,draw,label=right:$B_6$,label=$5$] at (1,1) (b6) {} edge(a3) edge(a6);
    \end{tikzpicture}
    \caption{An example of a graph in which the algorithm by Abu-Khzam et al.~\cite{abu-khzam2025biclusterediting} can fail to find an optimal solution.
    Each node represents a critical independent set and the number above it shows how many vertices are contained in it.
    The nodes on the left side represent the vertices in~$V_1$ and the nodes on the right represent~$V_2$.
    The only optimal solution (with~$k=4$) splits the vertices in~$A_2$, $A_3$, and~$A_4$ once each and deletes the edge between the vertex in~$A_3$ and in~$B_3$.
    Consider the case where the first~$P_4$ found by the algorithm contains vertices from~$A_1,B_1,A_2,$ and~$B_2$.
    The only guess to consider is to mark the vertex in~$A_2$.
    The second~$P_4$ found contain vertices from~$A_5,B_5,A_4$, and~$B_4$ and the vertex in~$A_4$ is marked.
    The third~$P_4$ contains vertices from~$A_6,B_6,A_3$, and~$B_2$.
    Only marking the vertex in~$A_3$ has to be considered.
    However, now no more~$P_4$ with an unmarked vertex~$c$ can be found and hence the algorithm computes a solution in which only vertices are split. The optimal way to do this splits the vertex in~$A_2$ once and each vertex in~$A_3 \cup A_4$ twice.
    Thus, the cost is~$5 > 4$ showing that the algorithm by Abu-Khzam et al.~\cite{abu-khzam2025biclusterediting} is flawed.}
    \label{fig:counterexample3}
\end{figure}

\begin{theorem}
    \BEVS{} and \BEOVS{} are solvable in ${O(k^{11k} + n + m)}$ time.
\end{theorem}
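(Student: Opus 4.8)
The plan is to first invoke \Cref{thm:kernel} to reduce, in $O(n+m)$ time, to an equivalent instance whose critical independent set quotient graph has at most $6k$ nodes and whose vertex set has size $O(k^2)$, and then to solve this kernel by brute-force enumeration over the combinatorial ``shape'' of an optimal solution. By \Cref{lem:noconversion,lem:cis} I may restrict attention to solutions in which every biclique has its left side among copies of $V_1$ and its right side among copies of $V_2$, and in which each critical independent set $I_i$ is either absent from a biclique or contributes exactly one copy of each of its vertices to one fixed side. Such a solution is therefore completely described, up to isomorphism, by the number $t$ of bicliques together with an assignment that maps each critical independent set $I_i$ to the nonempty set $S_i \subseteq [t]$ of bicliques it occupies; the copies, their neighborhoods, and all forced edge edits are then determined. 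For \BEOVS{} I additionally impose $|S_p| = 1$ for every critical independent set $I_p \subseteq V_2$, since those vertices may not be split.

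Second, I would show that the minimum number of operations realizing a fixed assignment can be computed in $\poly(k)$ time by the closed formula
\[
  \sum_{i} |I_i|\,(|S_i|-1) \;+\; \sum_{I_i \sim I_p,\ S_i \cap S_p = \emptyset} |I_i|\,|I_p| \;+\; \sum_{I_i \not\sim I_p} |I_i|\,|I_p|\,|S_i \cap S_p|,
\]
where the pair sums range over $I_i \subseteq V_1$ and $I_p \subseteq V_2$. The first term counts vertex splits, since turning each vertex of $I_i$ into $|S_i|$ copies costs $|S_i|-1$ binary splits; the second counts the deletions forced when two originally adjacent sets are assigned to disjoint bicliques; and the third counts the additions forced when two originally nonadjacent sets share a biclique (which must then be completed). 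That this value lower-bounds every solution of the given shape follows from \Cref{lem:cis}, and that it is achievable follows by exhibiting an explicit edit sequence, using that an inclusive split lets a vertex distribute its neighborhood freely among its copies: an original edge is realized for free whenever the two endpoints share at least one common biclique, and must be deleted precisely when they share none, which is exactly what the union constraint $N(v_1)\cup N(v_2)=N(v)$ dictates.

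Third, I would bound the number of assignments to enumerate. Since each split contributes at least one to $\sum_i(|S_i|-1)$, any solution of size at most $k$ has $\sum_i(|S_i|-1)\le k$, so the total number of incidences $\sum_i|S_i|$ is at most $6k+k=7k$; moreover, because every final biclique must contain a vertex touched by the solution (otherwise it would be an untouched original component that is already a biclique, which \Cref{red:1} has ruled out) and there are at most $2k$ such touched vertices, the number $t$ of bicliques is at most $2k$. Hence every candidate assignment arises by letting each of the at most $6k$ critical independent sets pick one of the $t\le 2k$ bicliques and then placing at most $k$ further incidences, each a pair from a grid of size at most $6k\times 2k$. This yields at most $(2k)^{6k}\cdot(12k^2)^{k}=O(k^{11k})$ assignments; ranging $t$ over $[2k]$ only contributes a polynomial factor. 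Evaluating the formula on each assignment in $\poly(k)$ time and accepting iff some assignment has cost at most $k$ gives the claimed $O(k^{11k}+n+m)$ running time, and the same procedure handles \BEOVS{} by discarding assignments that split a vertex in $V_2$.

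The step I expect to be the main obstacle is the correctness of the cost formula, and specifically its achievability direction: I must verify that the inclusive-split constraint never forces edits beyond those counted, which requires routing each original edge into a common biclique and checking that every copy's neighborhood can simultaneously respect the complete-bipartite structure of its biclique and the union constraint, realized by a valid \emph{sequence} of binary splits and edge edits. A secondary point is ensuring the enumeration genuinely stays within $k^{O(k)}$ rather than $2^{O(k^2)}$; this hinges entirely on the incidence bound $\sum_i|S_i|\le 7k$, which prevents us from having to consider an independent subset of $[t]$ for each of the $\Theta(k)$ critical independent sets.
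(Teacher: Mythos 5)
Your overall architecture---kernelize via \Cref{thm:kernel}, then enumerate set-valued assignments $S_i$ of critical independent sets to at most $2k$ bicliques, with the enumeration kept to $k^{O(k)}$ by the incidence bound $\sum_i(|S_i|-1)\le k$---is essentially the paper's proof, which implements the same enumeration as a coloring of the quotient graph plus at most $O(k)$ extra membership guesses. The genuine gap is in your cost formula, and it fails in the opposite direction from the one you flagged. An inclusive split only requires $N(v_1)\cup N(v_2)=N(v)$, so \emph{both} copies may keep a given neighbor; consequently a \emph{single} added edge $\{u,v\}$, inserted before the splits, is duplicated for free and serves \emph{every} common biclique of $u$ and $v$. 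The minimum cost of realizing an assignment therefore charges each nonadjacent pair with $S_i\cap S_p\neq\emptyset$ exactly once, i.e.\ the third term must be $\sum_{I_i\not\sim I_p,\ S_i\cap S_p\neq\emptyset}|I_i|\,|I_p|$ rather than $\sum_{I_i\not\sim I_p}|I_i|\,|I_p|\,|S_i\cap S_p|$. So achievability (the direction you worried about) is fine---your formula's value is always realizable---but your claim that the formula ``lower-bounds every solution of the given shape'' is false, and the algorithm as stated can reject yes-instances of \BEVS. (For \BEOVS{} the bug is masked, since there $|S_p|=1$ forces $|S_i\cap S_p|\le 1$; but the theorem covers both problems.) The paper avoids this by not using a closed formula at all: it explicitly constructs the edit sequence, adding one edge per pair guessed to share a biclique, deleting one edge per adjacent pair guessed to share none, and letting the splits route and duplicate edges.

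Here is a concrete instance where the overcounting is fatal. Let $V_1=\{u\}\cup A\cup B$ and $V_2=\{v\}\cup C\cup D$ with $|A|=|B|=|C|=|D|=4$, and edges $u\times(C\cup D)$, $v\times(A\cup B)$, $A\times C$, $B\times D$ (so $u\not\sim v$, and $A,B,C,D,\{u\},\{v\}$ are the six critical independent sets). Three operations suffice: add $\{u,v\}$, split $u$ into $u_1$ with $N(u_1)=C\cup\{v\}$ and $u_2$ with $N(u_2)=D\cup\{v\}$, then split $v$ into $v_1$ with $N(v_1)=A\cup\{u_1\}$ and $v_2$ with $N(v_2)=B\cup\{u_2\}$; the result is the two bicliques $(\{u_1\}\cup A)\times(\{v_1\}\cup C)$ and $(\{u_2\}\cup B)\times(\{v_2\}\cup D)$, so $(G,3)$ is a yes-instance of \BEVS. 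But under your formula, any assignment of value less than $4$ must make $S_A,S_B,S_C,S_D$ singletons with $S_A=S_C$, $S_B=S_D$ and $S_A\cap S_D=S_B\cap S_C=\emptyset$ (otherwise a split, deletion, or addition term of size at least $4$ appears), which forces $S_u\supseteq S_A\cup S_B$ and $S_v\supseteq S_A\cup S_B$; then your formula evaluates to at least $1+1+|I_u|\,|I_v|\,|S_u\cap S_v|\ge 4$ on every assignment, and your algorithm answers no. With the corrected term charging the pair $(u,v)$ once, the optimal assignment evaluates to $3$ and your argument goes through; fixing the formula (and proving the corrected version in both directions, with the explicit split routing) is what is needed to complete the proof.
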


\begin{proof}
    First, we compute the kernel~$G'$ from \Cref{thm:kernel}, all critical independent sets, and the critical independent set quotient graph~$\mathcal{I}$ of the kernel in linear time \cite{HM91}.
    Note that~$\mathcal{I}$ contains at most~$6k$~vertices.
    Let~$\sigma$ be a solution that satisfies \Cref{lem:noconversion,lem:cis}.
Let~$\mathcal{X} = \{H_1, H_2, \ldots, H_\ell\}$, where each~$H_j = (A_j, B_j)$ is a biclique in~$\ver{G'}{\sigma}$.
Note also that~$\mathcal{X}$ contains~$\ell \leq 2k$ bicliques as each operation can complete at most two bicliques of the solution (removing an edge between two bicliques or splitting a vertex contained in both bicliques) and \Cref{red:1} removed all isolated bicliques (bicliques with no further connections in the input graph).
Hence, if there are more than~$2k$ bicliques in the solution, then we cannot reach the solution with~$k$ operations.
    To streamline the following argumentation, we will cover the nodes in~$\mathcal{I}$ by bicliques~$H_1,H_2,\ldots,H_{\ell=2k}$ and assume that an optimal solution contains exactly~$2k$ bicliques by allowing some of the bicliques to be empty.
    Next, let~$a_1,a_2,\ldots,a_{\ell},b_1,b_2,\ldots,b_{\ell},c$ be~$2\ell+1$ colors.
    Next, we iterate over all possible colorings of the nodes in~$\mathcal{I}$ such that each critical independent set~$I_i \subseteq V_1$ gets a color in~$\{c,a_1,a_2,\ldots,a_\ell\}$ and each critical independent set~$I_{i'} \subseteq V_2$ gets a color in~$\{c,b_1,b_2,\ldots,b_\ell\}$ for \BEVS{} and a color in~$\{b_1,b_2,\ldots,b_\ell\}$ for \BEOVS.
    Note that there are at most~$6k$ critical independent sets in~$G$ and hence there are at most~${(\ell+1)^{6k} \in O((2k+1)^{6k})}$ such colorings.

    The idea behind the coloring is the following.
    We will try to find a solution satisfying \Cref{lem:noconversion,lem:cis}.
    Each color~$a_j$ corresponds to the ``left side''~$A_j$ of a biclique~$H_j = (A_j,B_j)$ in the solution graph.
    The color~$b_j$ corresponds to set~$B_j$.
    That is, we try to find a solution where all (vertices in critical independent sets corresponding to) nodes of the same color (except for color~$c$) belong to the side of the same biclique in the solution.
    The color~$c$ indicates that the node will belong to multiple bicliques in the solution, that is, all vertices in the respective critical independent set will be split.
    Since each such split operation reduces~$k$ by one, we can reject any coloring in which the number of vertices in critical independent sets corresponding to nodes with color~$c$ is more than~$k$.
    In particular, we can reject any coloring in which more than~$k$ nodes have color~$c$.

    Next, we guess two indices~$i \in [k], j \in [\ell]$ and assume that the~$i$\textsuperscript{th} node of color~$c$ belongs\footnote{We only consider~$B_j$ in the case of \BEVS{} and the choice depends on whether the vertices in the critical independent set are contained in~$V_1$ or~$V_2$.}
    to~$A_j$ or~$B_j$---or that all nodes of color~$c$ have been assigned to all bicliques they belong to.
    Note that each guess is over~$k\ell+1$ possibilities.
    Moreover, at most~$k+1$ guesses do not reduce~$k$ by at least one (the last guess and the first time each index~$i \in [k]$ is guessed)
    Hence, we can make at most~$2k+1$ guesses.
    Thus, the number of such guesses is at most
    \[(k\ell+1)^{2k+1} = (2k^2+1)^{2k+1} \in O((2k+1)^{4k+1}).\]

    It remains to compute the best solution corresponding to each possible set of guesses.
    To this end, we first iterate over each pair of vertices and add an edge between them if this edge does not already exist and we guess that there is a biclique~$H_j$ which contains a copy of each of the two vertices.
    Moreover, we remove an existing edge between them if we guessed that the two vertices do not appear in a common biclique.
    Finally, we perform all split operations.
    Therein, we iteratively split one vertex~$v$ into two vertices~$u_1$ and~$u_2$ where~$u_1$ will be the vertex in some set~$A_j$ or~$B_j$ and~$u_2$ might be split further in the future.
    The vertex~$u_1$ is adjacent to all vertices that are guessed to belong to~$B_j$ or~$A_j$.
    The vertex~$u_2$ is adjacent to all vertices that~$u$ was adjacent to, except for vertices that are adjacent to~$u_1$ and not guessed to also belong to some other biclique~$H_{j'}$ which (some copy of)~$u_2$ belongs to.

    Since our algorithm performs an exhaustive search of all possible solutions satisfying \Cref{lem:noconversion,lem:cis}, it will find a solution if one exists.
    It only remains to analyze the running time.
    We first compute the kernel in~$O(n+m)$ time.
    We then try~$O((2k+1)^{6k})$ possible colorings of~$\mathcal{I}$ and for each coloring~$O{((2k+1)^{4k+1})}$~guesses.
    Afterwards, we compute the solution in~$O(k^2)$~time as~$n \in O(k)$ by \Cref{thm:kernel}.
    Thus, the overall running time is in
    \[{\hspace*{2.1cm} O((2k+1)^{10k+1} \cdot k^2 + n+m) \subseteq O(k^{11k}+n+m)}. \hspace*{2.1cm} \]
\end{proof}

\section{Conclusion}
\label{sec:conclusion}
We showed that both \BEVS{} and \BEOVS{} admit polynomial kernels with~$O(k^2)$ vertices computable in linear time.
This resolves two open problems by Abu-Khzam et al.~\cite{abu-khzam2025biclusterediting}.
Moreover, we show that their FPT algorithm for \BEOVS{} contains a flaw and present a different algorithm solving both problem variants in~${O(k^{11k} + n + m)}$~time.

We highlight several directions for future work.
One direction is approximation: is there a factor-$c$ approximation that can be computed in polynomial time for any constant~$c$ for either problem?
Another direction is to study \BEVS{} on non-bipartite input graphs.
A key question is whether a variant of \Cref{lem:cis} still holds in that setting.
Next, we leave it open whether the polynomial kernels can be improved to~$O(k)$ vertices.
Finally, we ask whether an algorithm running in~$2^{O(k)}\poly(n)$~time exists for \BEVS{} or \BEOVS.


\begin{thebibliography}{10}

\bibitem{AABD+23}
Faisal~N. Abu-Khzam, Emmanuel Arrighi, Matthias Bentert, P{\aa}l~Gr{\o}n{\aa}s
  Drange, Judith Egan, Serge Gaspers, Alexis Shaw, Peter Shaw, Blair~D.
  Sullivan, and Petra Wolf.
\newblock Cluster editing with vertex splitting.
\newblock {\em Discrete Applied Mathematics}, 371:185--195, 2025.

\bibitem{abu-khzam2026complexity2club}
Faisal~N. Abu-Khzam, Tom Davot, Lucas Isenmann, and Sergio Thoumi.
\newblock On the {Complexity} of 2-{Club Cluster Editing} with {Vertex
  Splitting}.
\newblock In {\em Proceedings of the 31st International Computing and
  Combinatorics Conference ({COCOON})}, pages 3--14. Springer Nature, 2026.

\bibitem{abu-khzam2025biclusterediting}
Faisal~N. Abu-Khzam, Lucas Isenmann, and Zeina Merchad.
\newblock Bicluster {Editing} with {Overlaps}: {A} vertex splitting approach.
\newblock In {\em Proceedings of the 36th International Workshop on
  Combinatorial Algorithms ({IWOCA})}, pages 146--159. Springer Nature, 2025.

\bibitem{AAB+23}
Abu~Reyan Ahmed, Patrizio Angelini, Michael~A. Bekos, Giuseppe~Di Battista,
  Michael Kaufmann, Philipp Kindermann, Stephen~G. Kobourov, Martin
  N{\"{o}}llenburg, Antonios Symvonis, Ana{\"{\i}}s Villedieu, and Markus
  Wallinger.
\newblock Splitting vertices in 2-layer graph drawings.
\newblock {\em {IEEE} Computer Graphics and Applications}, 43(3):24--35, 2023.

\bibitem{AKK22}
Abu~Reyan Ahmed, Stephen~G. Kobourov, and Myroslav Kryven.
\newblock An {FPT} algorithm for bipartite vertex splitting.
\newblock In {\em Proceeding of the 30th International Symposium on Graph
  Drawing and Network Visualization ({GD})}, pages 261--268. Springer, 2022.

\bibitem{amit2004biclustergraph}
Noga Amit.
\newblock {\em The bicluster graph editing problem}.
\newblock {PhD} {Dissertation}, Tel Aviv University, 2004.

\bibitem{bansal_correlation_2004}
Nikhil Bansal, Avrim Blum, and Shuchi Chawla.
\newblock Correlation clustering.
\newblock {\em Machine Learning}, 56(1-3):89--113, 2004.

\bibitem{barber2007modularitycommunity}
Michael~J. Barber.
\newblock Modularity and community detection in bipartite networks.
\newblock {\em Physical Review E}, 76(6):066102, 2007.

\bibitem{bauer-mehren_gene-disease_2011}
Anna Bauer-Mehren, Markus Bundschus, Michael Rautschka, Miguel~A. Mayer, Ferran
  Sanz, and Laura~I. Furlong.
\newblock Gene-disease network analysis reveals functional modules in
  mendelian, complex and environmental diseases.
\newblock {\em PLOS ONE}, 6(6):e20284, 2011.

\bibitem{BPR24}
Jakob Baumann, Matthias Pfretzschner, and Ignaz Rutter.
\newblock Parameterized complexity of vertex splitting to pathwidth at most 1.
\newblock {\em Theoretical Computer Science}, 1021:114928, 2024.

\bibitem{BCDRS24}
Matthias Bentert, Alex Crane, P{\aa}l~Gr{\o}n{\aa}s Drange, Felix Reidl, and
  Blair~D. Sullivan.
\newblock Correlation clustering with vertex splitting.
\newblock In {\em Proceedings of the 19th Scandinavian Symposium and Workshops
  on Algorithm Theory ({SWAT})}, pages 8:1--8:17. Schloss Dagstuhl -
  Leibniz-Zentrum f{\"{u}}r Informatik, 2024.

\bibitem{cheng1999biclusteringexpression}
Yizong Cheng and George~M. Church.
\newblock Biclustering of expression data.
\newblock In {\em Proceedings of the 8th International Conference on
  Intelligent Systems for Molecular Biology (ISMB)}, pages 93--103. AAAI Press,
  1999.

\bibitem{craciun_multiple_2006}
Gheorghe Craciun and Martin Feinberg.
\newblock Multiple equilibria in complex chemical reaction networks: {II}.
  {T}he species-reaction graph.
\newblock {\em SIAM Journal on Applied Mathematics}, 66(4):1321--1338, 2006.

\bibitem{crespelle_survey_2023}
Christophe Crespelle, Pål~Grønås Drange, Fedor~V. Fomin, and Petr Golovach.
\newblock A survey of parameterized algorithms and the complexity of edge
  modification.
\newblock {\em Computer Science Review}, 48:100556, 2023.

\bibitem{CFKLMPPS15}
Marek Cygan, Fedor~V. Fomin, Lukasz Kowalik, Daniel Lokshtanov, D{\'{a}}niel
  Marx, Marcin Pilipczuk, Michal Pilipczuk, and Saket Saurabh.
\newblock {\em Parameterized Algorithms}.
\newblock Springer, 2015.

\bibitem{diestel2005graphtheory}
Reinhard Diestel.
\newblock {\em Graph Theory}.
\newblock Springer, 2005.

\bibitem{drange2015fastbiclustering}
Pål~Grønås Drange, Felix Reidl, Fernando~S. Villaamil, and Somnath Sikdar.
\newblock Fast biclustering by dual parameterization.
\newblock In {\em Proceedings of the 10th International Symposium on
  Parameterized and Exact Computation ({IPEC})}, pages 402--413. Schloss
  Dagstuhl --- Leibniz-Zentrum für Informatik, 2015.

\bibitem{du_overlapping_2008}
Nan Du, Bai Wang, Bin Wu, and Yi~Wang.
\newblock Overlapping community detection in bipartite networks.
\newblock In {\em Proceedings of the 2008 {IEEE} / {WIC} / {ACM} International
  Conference on Web Intelligence ({WI})}, pages 176--179. {IEEE} Computer
  Society, 2008.

\bibitem{EM95}
Peter Eades and Candido Ferreira~Xavier de~Mendon{\c{c}}a~Neto.
\newblock Vertex splitting and tension-free layout.
\newblock In {\em Proceedings of the 3rd International Symposium on Graph
  Drawing ({GD})}, pages 202--211. Springer, 1995.

\bibitem{FFN01}
Lu{\'{e}}rbio Faria, Celina M.~H. de~Figueiredo, and Candido Ferreira~Xavier
  de~Mendon{\c{c}}a~Neto.
\newblock Splitting number is {NP}-complete.
\newblock {\em Discrete Applied Mathematics}, 108(1-2):65--83, 2001.

\bibitem{firbas_complexity_2025}
Alexander Firbas, Alexander Dobler, Fabian Holzer, Jakob Schafellner, Manuel
  Sorge, Anaïs Villedieu, and Monika Wißmann.
\newblock The complexity of cluster vertex splitting and company.
\newblock {\em Discrete Applied Mathematics}, 365:190--207, 2025.

\bibitem{flum2006parameterizedcomplexity}
Jörg Flum and Martin Grohe.
\newblock {\em Parameterized {Complexity} {Theory}}.
\newblock Springer, 2006.

\bibitem{GHKZ08}
Jiong Guo, Falk H{\"{u}}ffner, Christian Komusiewicz, and Yong Zhang.
\newblock Improved algorithms for bicluster editing.
\newblock In {\em Proceedings of the 5th International Conference on Theory and
  Applications of Models of Computation ({TAMC})}, pages 445--456. Springer,
  2008.

\bibitem{HM91}
Wen-Lian Hsu and Tze-Heng Ma.
\newblock Substitution decomposition on chordal graphs and applications.
\newblock In {\em Proceedings of the 2nd International Symposium on Algorithms
  ({ISA})}, pages 52--60. Springer, 1991.

\bibitem{jiang_bifold_2017}
Yazhen Jiang, Joseph~D. Skufca, and Jie Sun.
\newblock {BiFold} visualization of bipartite datasets.
\newblock {\em EPJ Data Science}, 6(1):2, 2017.

\bibitem{komusiewicz_cluster_2012}
Christian Komusiewicz and Johannes Uhlmann.
\newblock Cluster editing with locally bounded modifications.
\newblock {\em Discrete Applied Mathematics}, 160(15):2259--2270, 2012.

\bibitem{lafond_even_2020}
Manuel Lafond.
\newblock Even better fixed-parameter algorithms for bicluster editing.
\newblock In {\em Computing and {Combinatorics}}, pages 578--590. Springer
  International Publishing, 2020.

\bibitem{Lafond24}
Manuel Lafond.
\newblock Improved kernelization and fixed-parameter algorithms for bicluster
  editing.
\newblock {\em Journal of Combinatorial Optimization}, 47(5):90, 2024.

\bibitem{MO04}
Sara~C. Madeira and Arlindo~L. Oliveira.
\newblock Biclustering algorithms for biological data analysis: {A} survey.
\newblock {\em {IEEE/ACM} Transactions on Computational Biology and
  Bioinformatics}, 1(1):24--45, 2004.

\bibitem{Niedermeier06}
Rolf Niedermeier.
\newblock {\em An Invitation to Fixed-Parameter Algorithms}.
\newblock Oxford University Press, 2006.

\bibitem{pontes2015biclusteringexpression}
Beatriz Pontes, Raúl Giráldez, and Jesús~S. Aguilar-Ruiz.
\newblock Biclustering on expression data: {A} review.
\newblock {\em Journal of Biomedical Informatics}, 57:163--180, 2015.

\bibitem{protti_applying_2009}
Fábio Protti, Maise Dantas~da Silva, and Jayme~Luiz Szwarcfiter.
\newblock Applying modular decomposition to parameterized cluster editing
  problems.
\newblock {\em Theory of Computing Systems}, 44(1):91--104, 2009.

\bibitem{sun_efficient_2014}
Peng Sun, Jiong Guo, and Jan Baumbach.
\newblock Efficient large-scale bicluster editing.
\newblock In {\em Proceedings of the German Conference on Bioinformatics 2014},
  pages 54--60. GI, 2014.

\bibitem{tanay2005biclusteringalgorithms}
Amos Tanay, Roded Sharan, and Ron Shamir.
\newblock Biclustering algorithms: {A} survey.
\newblock {\em Handbook of computational molecular biology}, 9(1-20):122--124,
  2005.

\bibitem{Tsur23}
Dekel Tsur.
\newblock Faster parameterized algorithms for bicluster editing and flip
  consensus tree.
\newblock {\em Theoretical Computer Science}, 953:113796, 2023.

\bibitem{XK22}
Mingyu Xiao and Shaowei Kou.
\newblock A simple and improved parameterized algorithm for bicluster editing.
\newblock {\em Information Processing Letters}, 174:106193, 2022.

\end{thebibliography}
\end{document}